\newtheorem{theorem}{Theorem}
\newtheorem{example}{Example}
\newcommand{\R}{\mathbb R}
\newcommand{\C}{\mathbb C}
\newcommand{\hi}{\mathcal{H}} %Hilbert space
\newcommand{\li}{\mathcal{L}} %other Hilbert space
\newcommand{\tr}[1]{\mathrm{tr}\left[#1\right]} %trace
\newcommand{\kb}[2]{|#1\,\rangle\langle\,#2|} %ketbra
\newcommand{\Mo}{\mathsf{M}} %generic observable
\newcommand{\No}{\mathsf{N}} %generic observable
\newcommand{\Po}{\mathsf{P}} %generic observable
\newcommand{\M}{\mathcal M} %projection
\newcommand{\A}{\mathsf{A}}
\def\<{\langle}
\def\>{\rangle}
\def\d{{\mathrm d}}
\newcommand{\fii}{\varphi}
\newcommand{\CHI}[1]{\ensuremath{ \chi\raisebox{-1ex}{$\scriptstyle #1$} }}
\newcommand{\ov}{\overline}
\begin{document} 

\title[Complete quantum measurements]{Complete measurements of quantum observables}

%\title[Complete quantum measurements]{Complete quantum measurements break entanglement}

%

\author{Juha-Pekka Pellonp\"a\"a}
\email{juhpello@utu.fi}
\address{Turku Centre for Quantum Physics, Department of Physics and Astronomy, University of Turku, FI-20014 Turku, Finland}

\begin{abstract}
We define a complete measurement of a quantum observable (POVM) as a measurement of the maximally refined version of the POVM.
Complete measurements give information from the multiplicities of the measurement outcomes and can be viewed as state preparation procedures.
We show that any POVM can be measured completely by using sequential measurements or maximally refinable instruments. Moreover, the ancillary space of a complete measurement can be chosen to be minimal.
%PACS numbers: 03.65.Ta, 03.67.--a
\end{abstract}

\pacs{03.65.Ta, 03.67.--a}

\maketitle

%%%%%%%%%%%%%%%%%%%%%%%%%%%%%%%%%%%%%%%%%%%%%%%%%%%%%%%%%%%
%
%%%TÄSSÄ ALKUMÄÄRITELMIÄ:
%
\section{Introduction}

Suppose that we want to perform a measurement of a quantum observable.
The observable can be, e.g.\ the energy (Hamiltonian) $H$ of an atom or the position $Q$ of a particle. The spectrum of $H$ may be degenerate or the particle may have a nonzero spin. Then, intuitively, any measurement of $H$ cannot be seen complete since it does not give `information' about degeneracies of energy states. Similarly, if we measure position $Q$, we do not know the spin of the particle. Obviously, in both cases, a complete measurement  would be a measurement of the maximally refined version of the observable in question since then the outcome space contains also degeneracies or multiplicities of the measurement outcomes. 
Then, the first question is how one can measure maximally refined observables.
As we will see, a solution to this problem is to measure the observable and some other `multiplicity' observable sequentially. For example, one could measure $Q$ first and then a spin observable.

A quantum measurement process starts with the preparation of the system in some  state (density matrix). Then an observable or several observables are measured sequentially. What happens if some of the measurements are complete in the above sense? We will show that a complete measurement can be viewed as a new preparation procedure, i.e.\ the measurement process `ends'  in a complete measurement (and possibly a new process starts if there are measurements left to be performed). 

Any observable allows `preparative' measurements but only for rank-1 observables {\it all} measurements are complete. Hence, if we know that an observable is rank-1 then we know that its measurement can be seen as a preparation of a new measurement even if the specific form of the measurement interaction is not known. Moreover, any observable has a rank-1 refinement so that one can define a complete measurement of an observable as a measurement of its maximally refined version.
Recall that many important observables are already of rank-1:
position and momentum observables (of a spin 0 particle moving on a space manifold), rotated quadratures, phase space observables generated by pure states, the canonical phase observable of a single mode electromagnetic field, and many discrete observables such that Hamiltonians with nondegenerate discrete spectra.

We show that there is a special class of instruments, so-called maximally refinable instruments, related to measurements of an observable which can be directly interpreted as complete measurements. The arise from measurement models of the observable only by changing (refining) the pointer observable of the ancillary system.

We also study minimality of the ancillary spaces of a fixed instrument, or an observable, and show that, to measure an observable, the ancillary space must be at least the space of wave functions related to the observable. The pointer is then the `position' observable.

Before one can study these questions theoretically in the most general context of infinite dimensional Hilbert spaces and continuous outcome spaces, one must define quantum observables and their measurement models rigorously enough. This will be done next.

\section{Observables, instruments, and measurement models}

Let us briefly recall the mathematical description of quantum observables via \emph{normalized positive operator valued measures} (POVMs) \cite{Da,BuLaMi,TeZi,Ho}. Consider a quantum system with a (possibly infinite dimensional) Hilbert space $\hi$ and suppose that the measurement outcomes form a set $\Omega$. 
A POVM is a function $\Mo$ which associates to each (measurable \footnote{Actually, for nondiscrete POVMs, one must also specify a $\sigma$-algebra $\Sigma$ consisting of subsets of $\Omega$. The pair $(\Omega,\Sigma)$ is called an \emph{outcome space} and any $X\in\Sigma$ is called \emph{measurable}.
})
 subset $X\subseteq\Omega$ a positive operator $\Mo(X)$ acting on $\hi$. It is required that for every state (a density matrix) $\varrho$, the mapping
$X\mapsto p^\Mo_\varrho(X)=\tr{\varrho\Mo(X)}$
is a probability distribution. Especially, $\Mo$ satisfies the normalization condition $\Mo(\Omega)=I$ (the identity operator). 
The number $\tr{\varrho\Mo(X)}$ is the probability of getting a measurement outcome $x$ belonging to $X$, when the system is in the state $\varrho$ and a measurement of $\Mo$ is performed. 

A POVM $\Mo$ is called a \emph{projection valued measure} (PVM), a \emph{sharp} POVM, or a {\it spectral measure}, if $\Mo(X)^2\equiv\Mo(X)$. 
In the case $\Omega=\R$, spectral measures correspond to self-adjoint operators.
Sharp POVMs have many important properties. For example, a PVM is always preprocessing clean, i.e.\ it cannot be obtained by (irreversibly) manipulating the state before the measurement and then measuring some other POVM \cite{Pe11}.

Davies and Lewis \cite{DaLe} introduced the concept of instrument which turned out to be crucial in developing quantum measurement theory since,
besides measurement statistics $p^\Mo_\varrho$,
it also describes the conditional state changes $\varrho\mapsto\varrho_X$ due to a quantum measuring process (see also \cite{Da,BuLaMi,TeZi}).
Recall that $\M$ is a {\it Heisenberg instrument} if it associates to each (measurable) set $X$ a completely positive map  \footnote{
Which is ultraweakly continuous (normal) so that $\mathcal I=\M_*$ exists.}
$B\mapsto \M(X,B)$ on the set of bounded operators on $\hi$ and  $X\mapsto \M(X,B)$ is a positive operator valued measure for any $B\ge0$. In addition,
$\M(\Omega,I)=I$ so that $X\mapsto\M(X,I)$ is the {\it associate POVM of $\M$.} We say that $\M$ is {\it $\Mo$-compatible} if $\Mo$ is the associate POVM of $\M$, i.e.\ $\M(X,I)\equiv\Mo(X)$.
Any Heisenberg instrument $\M$ defines a {\it Schr\"odinger instrument} $\mathcal I=\M_*$ by $$
\tr{\M_*(X,\varrho)B}\equiv\tr{\varrho\M(X,B)}
$$ 
and vice versa (i.e.\ $\mathcal I^*=\M$). Especially, 
$p^\Mo_\varrho(X)=\tr{\M_*(X,\varrho)}$, where $\Mo$ is the associate POVM of $\M$,
and one may define a {\it conditional output state} 
$$
\varrho_X=\M_*(X,\varrho)/\tr{\M_*(X,\varrho)}
$$
{\it corresponding to a set} $X$ of outcomes
which describes the state of the subensemble of the measured system in which the outcomes of the measurement lie in $X$.

Recall that
{\it a measurement model}  ${\bf M}$ of a POVM $\Mo$ is a 4-tuple ${\bf M}=\<\hi',\Po,\sigma,U\>$ consisting of a Hilbert space $\hi'$ attached to the probe system, a PVM $\Po$ acting on $\hi'$ (the pointer observable), an initial state $\sigma$ of $\hi'$, and a unitary operator $U$ on $\hi\otimes\hi'$ (the measurement interaction) satisfying the relation
$$
\tr{\varrho\Mo(X)}\equiv\tr{U(\varrho\otimes\sigma)U^*\big(I\otimes\Po(X)\big)}.
$$
Moreover, ${\bf M}$ is {\it pure} if $\sigma=\kb\xi\xi$ for some unit vector $\xi\in\hi'$.
A measurement model ${\bf M}=\<\hi',\Po,\sigma,U\>$ of $\Mo$ defines an $\Mo$-compatible instrument $\M$
by
\begin{equation}\label{memo}
\M_*(X,\varrho)={\rm tr}_{\hi'}\big[U(\varrho\otimes\sigma)U^*\big(I\otimes\Po(X)\big)\big]
\end{equation}
%is an {\it ancilla form} of $\M_*$
so that $\tr{\varrho\Mo(X)}=\tr{\M_*(X,\varrho)}$.
%Two measurement models are {\it statistically equivalent} if they define the same instrument. 
Ozawa \cite{Oz84} showed that any instrument $\M$ can be {\it realized} as a pure measurement model of the associate POVM $\Mo$ of $\M$, i.e.\ there exists an ${\bf M}=\<\hi',\Po,\sigma=\kb{\xi}{\xi},U\>$ such that \eqref{memo} holds for $\M$.
Recently, the structure of POVMs and instruments (and their measurement models) is thoroughly analyzed by the author \cite{Pe11,Pe12}.

Assume then that we measure POVMs $\Mo$ and $\No$ (of the same Hilbert space $\hi$ but possibly different outcome sets  $\Omega_\Mo$ and $\Omega_\No$) by performing their measurements sequentially (first $\Mo$ and then the next $\No$). This leads to the instrument ${\mathcal J}$ defined by 
$$
{\mathcal J}(X\times Y,B)=\M\big(X,\mathcal N(Y,B)\big),\quad X\times Y\subseteq\Omega_\Mo\times\Omega_\No,
$$ 
where $\M$ and $\mathcal N$ are $\Mo$- and $\No$-compatible instruments describing the measurements.
It defines a sequential joint POVM ${\mathsf J}$ with the value space $\Omega_\Mo\times\Omega_\No$, and whose margins are POVMs 
\begin{eqnarray*}
X&\mapsto&{\mathsf J}(X\times\Omega_\No,I)=\Mo(X), \\
Y&\mapsto&{\mathsf J}(\Omega_\Mo\times Y,I)=\M_1\big(\Omega_\Mo,\No(Y)\big)
\end{eqnarray*}
where the channel $\M\big(\Omega_\Mo,\bullet)$ operates to $\No$, that is, the first measurement disturbs the subsequent one \cite{TeZi}.
Moreover, 
$$
{\mathcal J}_*(X\times Y,\varrho)={\mathcal N}_*\big(Y,\M_*(X,\varrho)\big).
$$
For example, if we measure the position of a particle and then its spin, the measuring process can be viewed as a sequential measurement and the observable measured is the corresponding sequential joint POVM.

In the next section, we study the properties of discrete POVMs and their measurement models. In Section \ref{section3}, we generalize the results obtained in Section \ref{section2} to the general (nondiscrete) case.

\section{Discrete POVMs}\label{section2}

In the case of a discrete POVM $\Mo$ one may choose 
$\Omega=\{x_1,x_2,\ldots\}$ such that $\Mo_i=\Mo(\{x_i\})\ne 0$. Note that
the number $\#\Omega$ of the elements of $\Omega$ and $\dim\hi$ can be infinite.  For all $X\subseteq\Omega$,
$$
\Mo(X)=\sum_{i\,(x_i\in X)}\Mo_i,
$$
and for any $i$ there exists a linearly independent vectors $\{d_{ik}\}_{k=1}^{m_i}$ 
and $\{g_{ik}\}_{k=1}^{m_i}$ of $\hi$ such that
$$\boxed{
\Mo_i=\sum_{k=1}^{m_i}|d_{ik}\>\<d_{ik}|
}
%=\sum_{n,m=1}^{\dim\hi}\<\psi_{ni}|\psi_{mi}\>\kb{h_n}{h_m}=\A_i^* \A_i
$$
where $m_i$ is the {\it rank} of the effect $\Mo_i$ or the {\it multiplicity} of the outcome $x_i$ and vectors $g_{ik}$ satisfy the biorthogonality condition $\<d_{ik}|g_{i\ell}\>=\delta_{k\ell}$  \cite{Pe11,HyPeYl}.

%$$\<h_k|\psi_{mi}\>=\<d_{ik}|h_m\>$$ and $\{h_m\}_{n=1}^{\dim\hi}$ is an orthonormal (ON) basis of $\hi$. Moreover, $\A_i=\sum_{k=1}^{m_i}\kb{h_k}{d_{ik}}
%=\sum_{m=1}^{\dim\hi}\kb{\psi_{mi}}{h_m}$.

We say that $\Mo$ is of {\it rank 1} if $m_i\equiv 1$. Any $\Mo$ can be {\it maximally refined} into rank-1 POVM $\Mo^1$ whose value space $\Omega_\Mo$ consists of pairs $(x_i,k)$ where $x_i\in\Omega$ and $1\le k<m_i+1$.
Now 
$$
\boxed{
\Mo^1_{ik}=\Mo^1\big(\{(x_i,k)\}\big)=|d_{ik}\>\<d_{ik}|.
}
$$
Hence, $\Mo$ can be trivially obtained from $\Mo^1$ by {\it relabeling} of outcomes $(x_i,k)$ of $\Mo^1$, i.e.\ by giving the same label $x_i$ to all outcomes $(x_i,k)$, $k=1,2,\ldots$ \cite{HaHePe12}.

By Theorem \ref{theorem1} of Section \ref{section3}, an arbitrary $\Mo$-compatible (Heisenberg) instrument $\M$ is always of the form
$$
\M(X,B)=\sum_{i\,(x_i\in X)}\M_i(B)
$$
with $\M_i(B)=\M(\{x_i\},B)$ and
$$
\boxed{
\M_i(B)
=\sum_{k,l=1}^{m_i}\sum_{s=1}^{r_i}
\<\fii_{iks}|B\fii_{ils}\>|d_{ik}\>\<d_{il}| 
=\sum_{s=1}^{r_i}\A_{is}^*B\A_{is}
}
$$
where vectors $\fii_{iks}\in\hi$ satisfy the condition
$$
\boxed{
\sum_{s=1}^{r_i}\<\fii_{iks}|\fii_{ils}\>=\delta_{kl}
}
$$ 
and $r_i$ is the {\it rank} of the completely positive map $\M_i$ with the (minimal \footnote{That is, for each $i$, operators $\A_{is}$, $s=1,2,\ldots$, are linearly independent, i.e.\  the conditions $\sum_s c_s\fii_{iks}=0$ for all $k$ implies $c_s\equiv0$. (Hence, the rank $r_i$ is minimal.) }) {\it Kraus operators}
$$
\boxed{
\A_{is}=\sum_{k=1}^{m_i}
\kb{\fii_{iks}}{d_{ik}}
}
$$
which completely determine $\Mo$ and $\M$ via vectors $d_{ik}$ and $\fii_{iks}$, respectively. We say that $i$ is the {\it outcome index} of $\Mo$, $k$ is its {\it multiplicity index}, and $s$ is the {\it Kraus rank index} of $\M$. 
Now $\M$ is of {\it rank 1} if $r_i\equiv1$.

The corresponding Schr\"odinger instrument is
$$
\M_*(X,\varrho)=\sum_{i\,(x_i\in X)}\M_{*i}(\varrho)
$$
where $\M_{*i}(\varrho)=\sum_{s=1}^{r_i}\A_{is}\varrho\A_{is}^*$ or
$$
\boxed{
\M_{*i}(\varrho)
=\sum_{k,l=1}^{m_i}\<d_{il}|\varrho|d_{ik}\>\sum_{s=1}^{r_i}|\fii_{ils}\>\<\fii_{iks}|}
$$
so that the probability of getting the outcome $x_i$, when the system is prepared in the state $\varrho$, is
$$
p_\varrho^i=\tr{\varrho\Mo_i}=\tr{\M_{*i}(\varrho)}=\sum_{k=1}^{m_i}\<d_{ik}|\varrho|d_{ik}\>
$$
and the conditional output state corresponding to the set $\{x_i\}$ (or the point $x_i$) is
$$
\varrho_i=\varrho_{\{x_i\}}=\frac1{p_\varrho^i}\sum_{k,l=1}^{m_i}\<d_{il}|\varrho|d_{ik}\>\sum_{s=1}^{r_i}|\fii_{ils}\>\<\fii_{iks}|.
$$

\begin{example}[PVMs]\rm\label{example1}
Suppose that $\Mo$ is a PVM or equivalently $\{d_{ik}\}$ is an orthonormal (ON) basis of $\hi$. Then $\<d_{ik}|d_{jl}\>=\delta_{ij}\delta_{kl}$ and one sees immediately that
$$
\M_i(B)=\Mo_i\Phi(B)\Mo_i
$$
where $\Phi$ is a quantum channel defined by $\Phi(B)=\M(\Omega,B)$. Note that the commutator $[\Phi(B),\Mo_i]=0$. \qed
\end{example}

\begin{example}[Rank-1 POVMs]\rm\label{example2}
Assume that $\Mo$ is rank-1, i.e.\ $m_i\equiv1$. Then, by denoting
$d_i=d_{i1}$ and defining a rank-$r_i$ state $\sigma_i=\sum_{s=1}^{r_i}\kb{\fii_{i1s}}{\fii_{i1s}}$ we get
$$
\M_i(B)
=\tr{B\sigma_i}|d_{i}\>\<d_{i}|=\tr{B\sigma_i}\Mo_i
$$
and
$
\M_{*i}(\varrho)
=\<d_{i}|\varrho|d_{i}\>\sigma_i=p^i_\varrho\sigma_i
$
so that the post measurement state $\varrho_i=\sigma_i$ for all states $\varrho$, i.e.\ 
they do not depend on $\varrho$.
Physically this means that the instrument $\M$ describes a {complete} measurement of $\Mo$ in the sense that the posterior or output states $\varrho_i$ are completely known whatever the input state $\varrho$ is. If $x_i$ is registered then the state $\sigma_i$ is obtained, and if one measures other POVM $\No$ after $\Mo$ one gets its measurement outcome probabilities $p_{\sigma_i}^\No(Y)=\tr{\sigma_i\No(Y)}$ and conditional output states which do not depend on $\varrho$. Thus $\M$ can be viewed as a `state preparator' since it prepares states $\sigma_i$ with probabilities $p^i_\varrho=\tr{\varrho\Mo_i}$. One can measure $\No$ in the fixed state $\sigma_j$ simply by preparing the system in some state $\varrho$ and then first measuring $\Mo$ (with the above $\M$), and then selecting only states with correspond to the value $x_j$, i.e.\ one measures $\Mo$ and $\No$ sequentially to get the joint POVM
$\mathsf J$ which gives
$$
\tr{\varrho \mathsf J(\{x_j\}\times Y)}=\tr{\varrho\M_j(\No(Y))}=
p_{\sigma_j}^\No(Y)p_\varrho^j
$$
from where the probabilities $p_{\sigma_j}^\No(Y)$ can be obtained.
Now the conditional states of the joint measurement with the postselection $x=x_j$ are
$$
\varrho_{\{x_j\}\times Y}=\mathcal N_*(Y,\sigma_j)/\tr{\mathcal N_*(Y,\sigma_j)}
$$
where $\mathcal N$ is the instrument implementing $\No$.

Obviously, when $\sigma_i\equiv\sigma$ (a trivial instrument), posterior states $\varrho_i=\sigma$ for all $i$, i.e.\ they do not depend on the measurement of $\Mo$ in any way. This instrument can be obtained from
$\M_{*i}(\varrho)=p_\varrho^i\sigma_i$ %=p^i_\varrho\sigma_i
trivially by adding a `constant' channel which maps any state (especially $\sigma_i$) to $\sigma$. Hence, we have seen that {\it the measurements of rank-1 POVMs really complete the chain of measurements} (and start  new chains by preparing  states $\sigma_i$ by postselection).

It is easy to show that, if {\it all} $\Mo$-compatible instruments $\M$ are of the above form $\M_i(B)=\tr{B\sigma_i}\Mo_i$ (where $\sigma_i$ are states) then $\Mo$ is necessarily a rank-1 POVM \cite{HeWo}.
To conclude, {\it a POVM admits only complete measurements if and only if it is rank-1}.
\qed
\end{example}

\begin{example}[Rank-1 instruments]\rm\label{example3}
Let $\M$ be rank-1, i.e.\ $r_i\equiv1$. By denoting $\fii_{ik}=\fii_{ik1}$
we have $\<\fii_{ik}|\fii_{il}\>=\delta_{kl}$,
$$
\M_i(B)
=\sum_{k,l=1}^{m_i}\<\fii_{ik}|B\fii_{il}\>|d_{ik}\>\<d_{il}|=\A_i^*B\A_i
$$
where $\A_i=\sum_{k=1}^{m_i}|\fii_{ik}\>\<d_{ik}|$
and
$$
\M_{*i}(\varrho)
=\sum_{k,l=1}^{m_i}\<d_{il}|\varrho|d_{ik}\>|\fii_{il}\>\<\fii_{ik}|=\A_i \varrho\A_i^*.
$$
Hence, if $\varrho=\kb{\psi}{\psi}$ is pure then all posterior states $\varrho_i=\|\A_i\psi\|^{-2}\kb{\A_i\psi}{\A_i\psi}$ are pure. This is the characteristic feature of rank-1 instruments.

If both $\Mo$ and $\M$ are of rank 1 then, by the preceding example,
$$
\M_i(B)
=\<\fii_{i}|B\fii_{i}\>|d_{i}\>\<d_{i}|,\quad
\M_{*i}(\varrho)
=\<d_{i}|\varrho|d_{i}\>|\fii_{i}\>\<\fii_{i}|
$$
where $\fii_i=\fii_{i1}$, $\|\fii_i\|=1$.
Now posterior states $\varrho_i=\kb{\fii_i}{\fii_i}$ are pure {\it for all} states $\varrho$
and they do not depend on $\varrho$.
\qed
\end{example}

\begin{example}[Refinable instruments]\rm\label{example4}
From Example \ref{example2} we see that,
for any POVM $\Mo$, every $\Mo^1$-compatible instrument is
of the form
$$
\ov\M^1_{ik}(B)=\tr{B\sigma_{ik}}|d_{ik}\>\<d_{ik}|
$$
where $\sigma_{ik}$ are states.
It defines  an $\Mo$-compatible instrument $\ov\M$, a {\it compression of $\ov\M^1$}, by
$$
\ov\M_{i}(B)=\sum_{k=1}^{m_i}\tr{B\sigma_{ik}}|d_{ik}\>\<d_{ik}|.
$$
Conversely, if $\M$ is an $\Mo$-compatible instrument defined by vectors $\fii_{iks}$ one can define states $\sigma_{ik}=\sum_{s=1}^{r_i}\kb{\fii_{iks}}{\fii_{iks}}$
and an $\Mo^1$-compatible instrument
\begin{eqnarray*}
\ov\M^1_{ik}(B)=\tr{B\sigma_{ik}}|d_{ik}\>\<d_{ik}|
=\sum_{s=1}^{r_i}
\<\fii_{iks}|B\fii_{iks}\>|d_{ik}\>\<d_{ik}| 
\end{eqnarray*}
whose compression
$$
\ov\M_{i}(B)=\sum_{k=1}^{m_i}\sum_{s=1}^{r_i}
\<\fii_{iks}|B\fii_{iks}\>|d_{ik}\>\<d_{ik}| 
$$
is not necessarily $\M$. If $\ov\M=\M$ we say that $\M$ is {\it refinable}.
Thus, {\it $\Mo^1$-compatible instruments correspond to refinable $\Mo$-compatible instruments.}
\qed
%It is easy to see that any refinable $\M$ is defined by vectors $\fii_{iks}$
\end{example}

\begin{example}[Maximally refinable instruments]\rm\label{example5}
Similarly to $\Mo$ also its compatible instrument $\M$ can be {\it maximally refined} into a rank-1 instrument $\M^1$. Instead of $\Omega_\Mo$, it must be defined on $\Omega_\M$ which contains all pairs $(x_i,s)$, $x_i\in\Omega$, $1\le s<r_i+1$.
Now 
$$
\M^1_{is}(B)=\M^1\big(\{(x_i,s)\},B\big)=\A_{is}^*B\A_{is}
$$
whose associate POVM 
$$
\M^1_{is}(I)=\A_{is}^*\A_{is}=\sum_{a,b=1}^{m_i}
\<\fii_{ias}|\fii_{ibs}\>|d_{ia}\>\<d_{ib}| 
$$
is defined on $\Omega_\M$. We say that $\M$ is {\it maximally refinable} $\Mo$-compatible instrument if the associate POVM of $\M^1$ can be identified with $\Mo^1$, that is, $\Omega_\M=\Omega_\Mo$ and, for any
$(x_i,k)\in\Omega_\Mo$ there exists a unique $(x_i,s_k)\in\Omega_\M$ and
$\M^1_{is_k}(I)=\Mo^1_{ik}$. But this means that $r_i\equiv m_i$ 
and
$$
\<\fii_{ias_k}|\fii_{ibs_k}\>
=\delta_{ak}\delta_{bk},\qquad
\sum_{s=1}^{m_i}\<\fii_{iks}|\fii_{ils}\>=\delta_{kl}
$$
which can hold only if $\fii_{iks}=\delta_{ss_k}\fii_{ik}$ where $\fii_{ik}\in\hi$, $\|\fii_{ik}\|=1$. Clearly, without restricting generality, we may assume that $s_k=k$. Hence, maximally refinable instruments are of the form
\begin{eqnarray*}
\M_i(B)&=&\sum_{k=1}^{m_i}\<\fii_{ik}|B\fii_{ik}\>|d_{ik}\>\<d_{ik}| \\
&=&
\sum_{k=1}^{m_i}\tr{B\,\kb{\fii_{ik}}{\fii_{ik}}}|d_{ik}\>\<d_{ik}|
\end{eqnarray*}
and thus refinable.
They define rank-1 $\Mo^1$-compatible instruments
$$
\M^1_{ik}(B)=\tr{B\,\kb{\fii_{ik}}{\fii_{ik}}}|d_{ik}\>\<d_{ik}|.
$$
Hence, {\it rank-1 $\Mo^1$-compatible instruments correspond to maximally refinable $\Mo$-compatible instruments.} 

Note that the posterior states $\varrho_{ik}=\kb{\fii_{ik}}{\fii_{ik}}$ of the above instrument $\M^1$ are pure for all states $\varrho$ (see Example \ref{example3}). 
\qed
\end{example}

\subsection{Measurement models}
In this subsection, we consider measurement model realizations of $\Mo$-compatible instruments $\M$ with the vectors $d_{ik}$ and $\fii_{iks}$ (see boxed equations above).

Let $\li$ be an infinite dimensional Hilbert space with a fixed ON basis $\{b_s\}_{s=1}^\infty$ and $\hi_n$ an $n$-dimensional Hilbert space spanned by vectors $b_s$, $1\le s\le n$ (and $\hi_\infty=\li$ if $n=\infty$).
Define an ancillary Hilbert space $\hi'$  which consists of sequences
$$
\zeta=(\zeta_j)_{j=1}^{\#\Omega},\qquad\zeta_j\in\hi_{r_j}
$$
such that $\sum_{j=1}^{\#\Omega}\|\zeta_j\|^2<\infty$ and the inner product is $\<\zeta|\zeta'\>=\sum_{j=1}^{\#\Omega}\<\zeta_j|\zeta_j'\>$.
Obviously, the vectors $b_{is}=(b_{is|1},b_{is|2},\ldots)\in\hi'$, $1\le i<\#\Omega+1$, $1\le s<r_i+1$, defined by
$$
b_{is|j}=\delta_{ij}b_s
$$
form an ON basis of $\hi'$, i.e.\
$$
\zeta=\sum_{i,s}\<b_s|\zeta_i\>b_{is},
$$
and $\dim\hi'=\sum_{i=1}^{\#\Omega}r_i$.

Fix any unit vector $\xi\in\hi'$ and define a unitary measurement coupling
 $U:\,\hi\otimes\hi'\to\hi\otimes\hi'$ by 
\begin{eqnarray*}
U(\psi\otimes\xi)&=&
\sum_{i=1}^{\#\Omega}\sum_{s=1}^{r_i}\sum_{k=1}^{m_i}\<d_{ik}|\psi\>\fii_{iks}\otimes b_{is}
\\
&=&
\sum_{i=1}^{\#\Omega}\sum_{s=1}^{r_i}\A_{is}\psi\otimes b_{is}
\end{eqnarray*}
that is,
$
\<\fii\otimes b_{is}|U(\psi\otimes\xi)\>
%=\sum_{k=1}^{n_i}\<d_{ik}|\psi\>\<\fii|\fii_{iks}\>
=\tr{\kb{\psi}{\fii}\A_{is}},
$
and by extending $U$ to the whole space $\hi\otimes\hi'$. Note that the extension is not unique.

Finally, define a pointer PVM
$$
\Po(X)=\sum_{i\,(x_i\in X)} \sum_{s=1}^{r_i}|b_{is}\>\<b_{is}|
$$
of the ancillary space $\hi'$. By denoting the projection $\Po(\{x_i\})=\sum_s|b_{is}\>\<b_{is}|$ by $\Po_i$ one sees that
$$
{\rm tr}_{\hi'}\big[
U(\varrho\otimes\kb{\xi}{\xi})U^*(I\otimes\Po_i)\big]
%=\sum_{k,l=1}^{m_i}\<d_{il}|\varrho|d_{ik}\>\sum_{s=1}^{r_i}|\fii_{ils}\>\<\fii_{iks}|
=\M_{i*}(\varrho)
$$
i.e.\ the pure measurement model ${\bf M}=\<\hi',\Po,\kb\xi\xi,U\>$ of $\Mo$ realizes the $\Mo$-compatible instrument $\M$.
Indeed, the ancillary space $\hi'$ is the `smallest' possible Hilbert space for the (pure) realization of $\M$ \cite{Pe12}. If $\ov{\bf M}=\<\ov\hi',\ov\Po,\kb{\ov\xi}{\ov\xi},\ov U\>$ is another realization of $\M$ there exists an isometry $\hi'\to\ov\hi'$ and hence $\hi'$ can be embedded in $\ov\hi'$. Then $\Po$ can be seen as a projection of $\ov\Po$ to the subspace $\hi'$. The minimal ${\bf M}$ is unique (up to obvious unitary transformations and the choice of the spectrum of $\Po$ which can be, e.g.\ any $\{y_j\}_{j=1}^{\#\Omega}\subset\R$ so that $\sum_j y_j\Po_j$ is a self-adjoint operator).

Suppose then that one wants to measure a POVM $\Mo$, i.e.\ to construct its pure measurement model $\bf M$, such that the ancillary space $\hi'$ is the smallest possible Hilbert space. Then there are no unnecessary degrees of freedom in the measurement. But this means that one needs to find an $\Mo$-compatible instrument $\M$ such that its realization $\bf M$ is minimal. Since  $\dim\hi'=\sum_{i=1}^{\#\Omega}r_i$ {\it the minimal instrument must be of rank 1} (i.e.\ $r_i\equiv1$).

\begin{example}[Rank-1 instruments]\rm\label{example6}
Let $\M$ be rank-1, i.e $\M_i(B)
=\sum_{k,l=1}^{m_i}\<\fii_{ik}|B\fii_{il}\>|d_{ik}\>\<d_{il}|=\A_i^*B\A_i
$ (see Example \ref{example3}).
Since $\hi_{n=1}=\C b_1\cong\C$ we we may choose $\hi'=\ell^2(\Omega)$, the Hilbert space of square summable complex sequences $(c_j)_{j=1}^{\#\Omega}$, and put $b_1=1$ above. The standard ON basis $\{e_i\}_{i=1}^{\#\Omega}$ of $\ell^2(\Omega)$ consists of sequences $e_1=(1,0,0,\ldots)$, $e_2=(0,1,0,\ldots)$, etc.\  Thus, $b_{i1}=e_i$ and the pointer observable is
$
\Po(X)=\sum_{i\,(x_i\in X)} |e_{i}\>\<e_{i}|
$
showing that {\it its `eigenvalues' are not degenerate}.
Moreover, 
$$
U(\psi\otimes\xi)=\sum_{i=1}^{\#\Omega}\sum_{k=1}^{m_i}\<d_{ik}|\psi\>\fii_{ik}\otimes e_i=\sum_{i=1}^{\#\Omega}\A_i\psi\otimes e_i
$$
and, if $\varrho\otimes\kb{\xi}{\xi}$ is the factorized initial state of the compound `object-apparatus' system before the measurement, the entangled state after the measurement interaction is
$$
\omega_\varrho=U(\varrho\otimes\kb{\xi}{\xi})U^*=\sum_{i,j=1}^{\#\Omega}\A_i\varrho\A_j^*\otimes\kb{e_i}{e_j}
$$
whose subsystem states are the following partial traces:
\begin{eqnarray*}
{\rm tr}_{\hi'}[\omega_\varrho]&=&\sum_{i=1}^{\#\Omega}\A_i\varrho\A_i^*=\M(\Omega,\varrho)
=\sum_{i=1}^{\#\Omega}p_\varrho^i\varrho_i,\\
{\rm tr}_{\hi}[\omega_\varrho]&=&\sum_{i,j=1}^{\#\Omega}\tr{\A_i\varrho\A_j^*}\kb{e_i}{e_j}.
\end{eqnarray*}
Clearly, the probability reproducibility condition 
$$
p_\varrho^i=\tr{\varrho\Mo_i}\equiv\tr{\omega_\varrho\big(I\otimes\Po_i\big)}
=\tr{{\rm tr}_{\hi}[\omega_\varrho]\Po_i}
$$
holds. Here $\Mo_i=\A_i^*\A_i$ and $\Po_i=\kb{e_i}{e_i}$ as before.
Often it is assumed (the projection postulate) that
the object-apparatus state after registering and reading a value $x_i$ of $\Po$ is
\begin{eqnarray*}
\omega_\varrho^i =\frac1{p_\varrho^i}(I\otimes\Po_i)\omega_\varrho(I\otimes\Po_i) 
= \frac1{p_\varrho^i}\A_i\varrho\A_i^*\otimes\kb{e_i}{e_i}
\end{eqnarray*}
with the subsystem states
\begin{eqnarray*}
{\rm tr}_{\hi'}[\omega_\varrho^i]
&=&\frac1{p_\varrho^i}\A_i\varrho\A_i^*=\frac1{p_\varrho^i}\M_{i*}(\varrho)=\varrho_i, \\
{\rm tr}_{\hi}[\omega_\varrho^i]
&=&\kb{e_i}{e_i}
\end{eqnarray*}
i.e.\ the `state of the apparatus has collapsed into the eigenstate' $\kb{e_i}{e_i}$,
the object system is in the posterior state $\varrho_i$ determined by the instrument $\M$ and the total state is factorized.

%%LAITA JOS EHDIT:

%VAARIN!!!! :
%
%However, there is a problem of interpretation of the state ${\rm tr}_{\hi}[\omega_\varrho]=\sum_{i,j=1}^{\#\Omega}\tr{\A_i\varrho\A_j^*}\kb{e_i}{e_j}$ since it is not necessarily a mixture of the `eigenstates' $\kb{e_i}{e_i}$, i.e.\ of the form
%$$
%\sum_{i=1}^{\#\Omega}\tr{\A_i\varrho\A_i^*}\kb{e_i}{e_i}
%=\sum_{i=1}^{\#\Omega}p_\varrho^i\kb{e_i}{e_i}.
%$$
%
%so that OBJEKTISYSTEEMIN TILA would admit an ignorance intrepretation

If $\Mo$ is a PVM then $U$ is determined by 
$$
U(d_{ik}\otimes\xi)=\fii_{ik}\otimes e_i.
$$
The choice $\fii_{ik}=d_{ik}$ gives
the {\it von Neumann-L\"uders (vN-L) instrument} 
$$
\M(B)=\Mo_iB\Mo_i,\qquad
\M_{*i}(\varrho)
%=\sum_{k,l=1}^{m_i}\<d_{il}|\varrho|d_{ik}\>|d_{il}\>\<d_{ik}|
=\Mo_i\varrho\Mo_i
$$
since $\A_{i}=\Mo_i$. (Now $\Phi(B)=B$ in Example \ref{example1}.)
The vN-L instruments have many important properties. For instance, they are ideal and strongly repeatable \cite{BuLaMi}.
Recall that an instrument $\M$ is strongly repeatable if $\M_i\big(\M_j(B)\big)\equiv\delta_{ij}\M_i(B)$, that is, the repetition of $\M$ does not lead to a new result.

If both $\Mo$ and $\M$ are rank-1 then it follows that the interaction
\begin{equation}\label{sdfasdfasdf}
U(\psi\otimes\xi)=\sum_{i=1}^{\#\Omega}\<d_{i}|\psi\>\fii_{i}\otimes e_i
\end{equation}
where $\fii_i=\fii_{i1}$, $\|\fii_i\|=1$ (see Example \ref{example3}). 
\qed
\end{example}

\begin{example}[Maximally refinable instruments]\rm\label{example7}

Let $\M$ be a maximally refinable $\Mo$-compatible instrument (see Example \ref{example5}), i.e.\
$r_i\equiv m_i$, $\fii_{iks}=\fii_{ik}\delta_{ks}$, $\|\fii_{ik}\|=1$, and
$$
\M_i(B)=\sum_{k=1}^{m_i}\<\fii_{ik}|B\fii_{ik}\>|d_{ik}\>\<d_{ik}|.
$$
Note that generally now $\<\fii_{ik}|\fii_{il}\>\ne\delta_{kl}$ and, if
$\M$ is also rank-1 then $\Mo$ must be rank-1. Comparing 
%$$
%\M_{*i}(\varrho)
%=\sum_{k=1}^{m_i}\<d_{ik}|\varrho|d_{ik}\>|\fii_{ik}\>\<\fii_{ik}|
%$$
%and
%$$
%\big(U(\psi\otimes\xi)\big)_i
%=\sum_{k=1}^{m_i}\<d_{ik}|\psi\>\fii_{ik}\otimes b_k
%$$
%or
$$
U(\psi\otimes\xi)
=\sum_{i=1}^{\#\Omega}\sum_{k=1}^{m_i}\<d_{ik}|\psi\>\fii_{ik}\otimes b_{ik}
$$
to Eq.\ \eqref{sdfasdfasdf} we see that $U$ can be used for measuring both $\Mo$-compatible $\M$ and an $\Mo^1$-compatible rank-1 instrument 
$$
\M^1_{ik}(B)=\tr{B\,\kb{\fii_{ik}}{\fii_{ik}}}|d_{ik}\>\<d_{ik}|.
$$
The only difference is in the pointer observables.
For $\M$ the pointer observable is $\Po_i= \sum_{k=1}^{r_i}|b_{ik}\>\<b_{ik}|$
whereas for $\M^1$ it is $\Po^1_{ik}=|b_{ik}\>\<b_{ik}|$, the maximally refined $\Po$.
In conclusion, {\it if one can realize some maximally refinable $\Mo$-compatible instrument (and hence measure $\Mo$) as a pure measurement model $\bf M$ then one can measure the maximally refined $\Mo^1$ only by changing the pointer PVM of $\bf M$, i.e.\ by `reading' the multiplicities $k$ of the measurement outcomes $x_i$.}

Choose then $\fii_{ik}=g^1_{ik}=g_{i k}/\|g_{i k}\|$ where vectors $g_{ik}$ satisfy the biorthogonality condition $\<d_{ik}|g_{i\ell}\>=\delta_{k\ell}$. (In the case of a PVM, $g_{ik}=d_{ik}=g^{1}_{ik}$.) Now
$$
\M_{*i}\big(\kb{g^1_{i\ell}}{g^1_{i\ell}}\big)=\|g_{i\ell}\|^{-2}\kb{g^1_{i\ell}}{g^1_{i\ell}}
$$
so that, if the system is prepared in the pure state $\varrho_{i\ell}=\kb{g^1_{i\ell}}{g^1_{i\ell}}$, one gets an outcome $x_i$ with the probability 
$$
p_{\varrho_{i\ell}}^i=\sum_{k=1}^{m_i}\<d_{ik}|\varrho_{i\ell}|d_{ik}\>=\|g_{i\ell}\|^{-2},
$$
and if $x_i$ is registered then the output state is the same $\varrho_{i\ell}$.
Hence we have obtained a kind of `very weak repeatability condition.'
Note that now
$
\A_{is}=
\kb{g^1_{is}}{d_{is}}
$
and $\A_{is}\A_{it}=\delta_{st}\|g_{is}\|^{-1}\A_{is}$.
\qed
\end{example}

\subsection{Complete sequential measurements} 

As we have seen in Example \ref{example7}, it is possible to measure $\Mo^1$ of a POVM $\Mo$ by using a measurement model of $\Mo$ and by changing the pointer observable $\Po$. Next we show that $\Mo^1$ can be measured by first measuring $\Mo$ and then performing the vN-L measurement of a discrete PVM.

Let $\Mo_i$ be a POVM with vectors $d_{ik}$, $k<m_i+1$, and $K$ the largest multiplicity $m_i$ (or $\infty$ if $\sup_i\{m_i\}=\infty$).
Let $\{\No_k\}_{k=1}^K$ be any projections such that $\No_k\No_\ell=\delta_{k\ell}\No_k$ so that they form a PVM $\No$ if one defines $\No_0=I-\sum_{k=1}^K\No_k$. Note that projections $\No_k$ can be any projections associated to some PVM (or a self-adjoint operator) 
whose spectrum contains a discrete part which is large enough (the spectrum may also contain a `continuous' part). 
Define the vN-L instrument $B\mapsto\mathcal N_k(B)=\No_k B \No_k$ implementing the PVM $\No$.

Suppose then that one measures $\Mo$ and $\No$ sequentially.
Since any $\Mo$-compatible instrument is of the form
\begin{eqnarray*}
\M_i(B)
=\sum_{k,l=1}^{m_i}\sum_{s=1}^{r_i}
\<\fii_{iks}|B\fii_{ils}\>|d_{ik}\>\<d_{il}| 
%&=&\sum_{s=1}^{r_i}\A_{is}^*B\A_{is}
\end{eqnarray*}
the joint instrument is
\begin{eqnarray*}
{\mathcal J}_{ik}(B)&=&\M_i\big(\mathcal N_k(B)\big) \\
&=&
\sum_{a,b=1}^{m_i}\sum_{s=1}^{r_i}\<\No_k\fii_{ias}|B\No_k\fii_{ibs}\>|d_{ia}\>\<d_{ib}| .
\end{eqnarray*}
On the other hand, any $\Mo^1$-compatible instrument is of the form
$$
\ov\M^1_{ik}(B)=\tr{B\sigma_{ik}}|d_{ik}\>\<d_{ik}|
$$
where $\sigma_{ik}$ are states (see Example \ref{example4}).
Now
$$
{\mathcal J}_{ik}(B)\equiv\ov\M^1_{ik}(B)
$$
exactly when
$$
\boxed{
\sum_{s=1}^{r_i}|\No_k\fii_{ibs}\>\<\No_k\fii_{ias}|\equiv\delta_{ak}\delta_{bk}\sigma_{ik}
}
$$
so that we must have $\sigma_{ik}=\No_k\sigma_{ik}\No_k$, i.e.\ $\sigma_{ik}$ can be viewed as a state of the subspace $\No_k\hi$.

Choose Kraus ranks $r_i$ and vectors such that $\fii_{iks}\in\No_k\hi$ for all $i,\,s$,
$\sum_{s=1}^{r_i}\|\fii_{iks}\|^2=1$, and 
$\sum_s c_s \fii_{iks}=0$ for all $k$ implies that $c_s\equiv0$.
By choosing states $\sigma_{ik}=\sum_{s=1}^{r_i}\kb{\fii_{iks}}{\fii_{iks}}$ we get
${\mathcal J}_{ik}(B)\equiv\ov\M^1_{ik}(B)$. Obviously such vectors always exist:

For example, pick a unit vector $\phi_k$ from $\No_k\hi$ (implying $\<\phi_k|\phi_l\>=\delta_{kl}$) and define $\fii_{iks}=c_{iks}\phi_k$ where $\sum_{s=1}^{r_i}|c_{iks}|^2=1$ 
and $\sum_s c_s c_{iks}=0$ for all $k$ implies that $c_s\equiv0$.
Then $\sigma_{ik}=\kb{\phi_k}{\phi_k}$ and $\ov\M^1$ is rank-1.
We have two interesting special cases: 
\begin{enumerate}

\item $\M$ is rank-1 so that $\fii_{ik1}=\fii_{ik}=\phi_k$, i.e.\ $c_{ik1}=1$.
Then 
$
\M_{*i}(\varrho)
=\sum_{k,l=1}^{m_i}\<d_{il}|\varrho|d_{ik}\>|\phi_{l}\>\<\phi_{k}|
$
and 
$$
\mathcal N_{*k}\big(\M_{*i}(\varrho)\big)=\mathsf N_k\M_{*i}(\varrho)\mathsf N_k=\<d_{ik}|\varrho|d_{ik}\>|\phi_{k}\>\<\phi_{k}|
$$ 
describes a complete measurement of $\Mo$ where, after the measurements of $\Mo$ and $\No$, the state of the system is collapsed to the eigenstate $|\phi_{k}\>\<\phi_{k}|$ if $k$ is registered. The probability of getting $(x_i,k)$, $k<m_i+1$, is $\<d_{ik}|\varrho|d_{ik}\>$ (and 0 if $k>m_i$).

\item $\M$ is maximally refinable, that is, $r_i=m_i$ and $\fii_{iks}=\delta_{ks}\fii_{ik}=\delta_{ks}\phi_{k}$ or $c_{iks}=\delta_{ks}$. Now
$
\M_i(\varrho)=\sum_{k=1}^{m_i}\<d_{ik}|\varrho|d_{ik}\>|\phi_{k}\>\<\phi_{k}| 
$
giving the same complete measurement
$
\mathcal N_{*k}\big(\M_{*i}(\varrho)\big)=\<d_{ik}|\varrho|d_{ik}\>|\phi_{k}\>\<\phi_{k}|
$
as above.
\end{enumerate}

\section{Arbitrary POVMS}\label{section3}

In this section, we assume that $\Mo$ is an arbitrary POVM with an arbitrary value space $\Omega$ and generalize the results of the preceding section for $\Mo$.
It is shown in \cite{Pe11,HyPeYl} that 
$$
\Mo(X)=\int_X \sum_{k=1}^{m(x)}|d_k(x)\>\<d_k(x)|\d\mu(x)%=\int_X \A(x)^* \A(x)\d\mu(x)
%=\int_X \sum_{k=1}^{m(x)} \<\fii|\A_k(x)^*\A_k(x)\psi\>\d\mu(x),\hspace{0.5cm}\fii,\,\psi\in V_\e,
$$
where, for all $x$, generalized vectors $\{d_k(x)\}_{k=1}$ are linearly independent \footnote{Note that if $\dim\hi=\infty$ then $d_k(x)$ may belong to a larger space than $\hi$ \cite{Pe11}.}
and $\mu$ is some positive measure. It can always be chosen to be a probability measure $\mu(X)=\tr{\varrho_0\Mo(X)}$ where $\varrho_0$ is any state with (only) positive eigenvalues. We say that vectors $d_k(x)$ are {\it generalized coherent states} of $\Mo$ \cite{HePe12}. Note that $m(x)\le\dim\hi$.
In addition, there are (linearly independent) vectors $\{g_\ell(x)\}_{\ell=1}^{m(x)}$ such that $\<d_k(x)|g_\ell(x)\>=\delta_{k\ell}$ \cite{HyPeYl}.
If $\Mo$ is a PVM then {\it formally} $\<d_k(x)|d_\ell(y)\>=\delta_{k\ell}\delta_y(x)$ where $\delta_y(x)$ is the `Dirac's delta' concentrated on $y$.
If $\Mo$ is the spectral measure of a self-adjoint operator $S$ then
$Sd_k(x)=xd_k(x)$ if $x\in\R$ belongs to the spectrum of $S$ \cite{HyPeYl}. 
Hence, we call $m(x)$ the multiplicity of the measurement outcome $x$. 
Note that solutions $d_k(x)$ of $Sd_k(x)=xd_k(x)$ turn out to be useful for determining $\Mo$ of $S$ in many practical situations.

\begin{theorem}\label{theorem1}
Any $\Mo$-compatible instrument $\M$ is of the form
\begin{eqnarray*}
&&\M(X,B)=\int_X\sum_{s=1}^{r(x)}\A_{s}(x)^*B\A_{s}(x)\d\mu(x)\\
&&=\int_X\sum_{k,l=1}^{m(x)}\sum_{s=1}^{r(x)}
\<\fii_{ks}(x)|B\fii_{ls}(x)\>|d_{k}(x)\>\<d_l(x)| \d\mu(x)
\end{eqnarray*}
where the Kraus rank $r(x)$ is minimal, that is, the Kraus operators $\A_s(x)=\sum_{k=1}^{m(x)}\kb{\fii_{ks}(x)}{d_k(x)}$ are linearly independent for any fixed $x\in\Omega$. In addition, $\sum_{s=1}^{r(x)}\<\fii_{ks}(x)|\fii_{ls}(x)\>=\delta_{kl}$.
\end{theorem}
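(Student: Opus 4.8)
The plan is to obtain the integral form from a minimal dilation of $\M$ and then exploit $\Mo$-compatibility to force each Kraus operator to factor through the generalized coherent states $d_k(x)$. First I would invoke Ozawa's realization theorem \cite{Oz84} to represent $\M$ by a pure measurement model $\<\hi',\Po,\kb\xi\xi,U\>$, so that $V\psi:=U(\psi\otimes\xi)$ defines an isometry $V:\hi\to\hi\otimes\hi'$ with $\M(X,B)=V^*\big(B\otimes\Po(X)\big)V$, and I would pass to the minimal such model \cite{Pe12}. The pointer $\Po$ is a PVM with value space $\Omega$; since $\varrho_0$ has only positive eigenvalues, the probability measure $\mu(X)=\tr{\varrho_0\Mo(X)}$ charges exactly the non-null sets of $\Mo$, and by Radon--Nikodym I may base the direct-integral decomposition of the ancilla $\hi'=\int_\Omega^\oplus\hi'_x\,\d\mu(x)$ on $\mu$ itself, so that $\Po(X)$ acts as multiplication by the indicator of $X$.

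Under this decomposition $V$ is represented by a measurable field $x\mapsto V_x$ of bounded operators $V_x:\hi\to\hi\otimes\hi'_x$, and a routine computation gives
\[
\M(X,B)=\int_X V_x^*\big(B\otimes I_{\hi'_x}\big)V_x\,\d\mu(x).
\]
Selecting a measurable field of orthonormal bases $\{\eta_s(x)\}_{s=1}^{r(x)}$ of $\hi'_x$, where $r(x)=\dim\hi'_x$, and putting $\A_s(x):=\big(I\otimes\<\eta_s(x)|\big)V_x$, I recover the first asserted form
\[
\M(X,B)=\int_X\sum_{s=1}^{r(x)}\A_s(x)^*B\A_s(x)\,\d\mu(x),
\]
with $r(x)$ minimal and the operators $\A_s(x)$ linearly independent for $\mu$-a.e.\ $x$ precisely because the ambient dilation is minimal.

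The decisive step is to pin down the structure of $\A_s(x)$. Taking $B=I$ and comparing with the coherent-state representation of $\Mo$, the uniqueness of the density forces, for $\mu$-a.e.\ $x$, the operator identity $\sum_s\A_s(x)^*\A_s(x)=\sum_k\kb{d_k(x)}{d_k(x)}$. As each summand $\A_s(x)^*\A_s(x)$ is dominated by the right-hand side, every $\A_s(x)$ must annihilate the orthogonal complement of $D_x:=\overline{\lin\{d_k(x)\}}$, whence $\A_s(x)=\A_s(x)E_x$ with $E_x=\sum_k\kb{g_k(x)}{d_k(x)}$ the orthogonal projection onto $D_x$ written through the biorthogonal system $\<d_k(x)|g_\ell(x)\>=\delta_{k\ell}$. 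Defining $\fii_{ks}(x):=\A_s(x)g_k(x)$ then yields $\A_s(x)=\sum_k\kb{\fii_{ks}(x)}{d_k(x)}$, i.e.\ the second form. Substituting this back into $\sum_s\A_s(x)^*\A_s(x)=\sum_k\kb{d_k(x)}{d_k(x)}$ and equating coefficients, which is legitimate because the $\{d_k(x)\}$ are linearly independent, produces the normalization $\sum_s\<\fii_{ks}(x)|\fii_{ls}(x)\>=\delta_{kl}$.

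I expect the principal obstacle to be measure-theoretic rather than algebraic: realizing $\A_s(x)$, $d_k(x)$, $g_k(x)$ and $\fii_{ks}(x)$ as genuinely measurable fields, carrying out the disintegration of $\mu$ together with a measurable choice of fiber bases, and showing that minimality of the ambient dilation really descends to $\mu$-a.e.\ linear independence of the $\A_s(x)$ (hence to minimality of $r(x)$). A further subtlety, flagged after the coherent-state representation, is that in infinite dimensions the generalized vectors $d_k(x)$ may lie in a space strictly larger than $\hi$; the identities above must then be read in the direct-integral/rigged sense, with the biorthogonal $g_k(x)$ supplying well-defined pairings, rather than as genuine operators on $\hi$.
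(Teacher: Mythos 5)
Your proposal is correct in substance but follows a genuinely different route from the paper. The paper's proof is a two-line reduction: it cites Theorem 3 of \cite{Pe12}, which already delivers the factorization $\M(X,B)=\int_X\A(x)^*T_x(B)\A(x)\d\mu(x)$ through the universal operator $\A(x)=\sum_{k}\kb{b_k}{d_k(x)}$ (with $\{b_k\}$ an ON basis of $\hi$) and a completely positive \emph{channel} $T_x$; a pointwise minimal Kraus decomposition $T_x(B)=\sum_s\A^T_s(x)^*B\A^T_s(x)$ then gives $\A_s(x)=\A^T_s(x)\A(x)$ and $\fii_{ks}(x)=\A^T_s(x)b_k$, and the normalization $\sum_s\<\fii_{ks}(x)|\fii_{ls}(x)\>=\delta_{kl}$ falls out of unitality, $\<b_k|T_x(I)b_l\>=\delta_{kl}$. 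You instead rebuild this structure from the dilation picture: Ozawa's realization plus the minimal model, diagonalization of the pointer as multiplication on a direct integral over $(\Omega,\mu)$, fiberwise Kraus operators $\A_s(x)=(I\otimes\<\eta_s(x)|)V_x$, and then --- this is your key extra step, which the paper gets for free from the structure theorem --- the domination argument at $B=I$ forcing each $\A_s(x)$ to vanish on $\lin\{d_k(x)\}^\perp$ and hence to factor as $\sum_k\kb{\fii_{ks}(x)}{d_k(x)}$, with the normalization obtained by comparing coefficients against the linearly independent $d_k(x)$. (A minor slip: your $E_x=\sum_k\kb{g_k(x)}{d_k(x)}$ is the orthogonal projection onto $D_x$ only when the $g_k(x)$ are chosen inside $\lin\{d_k(x)\}$; for a general biorthogonal choice it is an oblique idempotent with kernel $D_x^\perp$, which is in fact all your argument needs.) As for what each approach buys: yours is more self-contained and exhibits exactly where the coherent-state factorization comes from --- in effect you have sketched a proof of the cited Theorem 3 itself --- whereas the paper's proof is shorter precisely because the measure-theoretic and unboundedness issues you rightly flag (measurable fields of bases and operators, pointwise unboundedness of $V_x$, the rigged-space reading when $d_k(x)\notin\hi$) are absorbed into the reference to \cite{Pe12} rather than resolved on the page.
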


\begin{proof}
Define (possibly unbounded) operators
$\A(x)=\sum_{k=1}^{m(x)}\kb{b_k}{d_k(x)}$
where $\{b_k\}$ can be chosen to be an ON basis of $\hi$.
 From Theorem 3 of \cite{Pe12}
follows that
$$
\M(X,B)\equiv 
\int_X \A(x)^* T_x(B)\A(x)\d\mu(x).
$$
where $T_x$ is a completely positive channel thus having a minimal Kraus decomposition
$$
T_x(B)=%\sum_{k=1}^{r(x)}C^*_x(B\otimes{\kb{b_k}{b_k}})C_x=
\sum_{s=1}^{r(x)} \A^T_s(x)^*B\A^T_s(x).
$$
By defining $\A_s(x)=\A^T_s(x)\A(x)$ and $\fii_{ks}(x)=\A_s^T(x)b_k$, Theorem follows.
\end{proof}

If $\Mo$ is discrete, i.e.\ concentrated on points $x_i$, then $\mu$ can be chosen to be a sum of Dirac deltas (point measures),
$
\mu=\sum_i \delta_{x_i}.
$
Then, for instance,
\begin{eqnarray*}
\Mo(X)&=&\int_X \sum_{k=1}^{m(x)}|d_k(x)\>\<d_k(x)|\d\mu(x) \\
&=&\sum_{i\,(x_i\in X)} \sum_{k=1}^{m(x_i)}|d_k(x_i)\>\<d_k(x_i)|.
\end{eqnarray*}
By denoting $m_i=m(x_i)$, $d_{ik}=d_k(x_i)$, $r_i=r(x_i)$, $\A_{is}=\A_s(x_i)$, and $\fii_{iks}=\fii_{ks}(x_i)$, we obtain the boxed equations of Section \ref{section2}.
But this works also conversely. Namely, by doing the above replacements and
replacing sums $\sum_i$ by integrals $\int_X(\cdots)\d\mu(x)$ one can generalize all definitions and results of Section \ref{section2} to arbitrary POVMs and instruments (see general results and methods from \cite{Pe12}).
Note that the subindex $i$ is removed and replaced by adding $(x)$.

For example, if $\Mo$ is rank-1, i.e.\ $m(x)=1$ (or 0), then 
$$
\Mo(X)=\int_X |d(x)\>\<d(x)|\d\mu(x)
$$
and every $\Mo$-compatible instrument is of the form
$$
\M(X,B)=\int_X\tr{B\sigma(x)} |d(x)\>\<d(x)|\d\mu(x)
$$
or $\M_*(X,\varrho)=\int_X\sigma(x)\tr{\varrho\Mo(\d x)}$
where $\sigma(x)$ are states.
For any POVM $\Mo$, its maximally refined rank-1 POVM is now
$$
\Mo^1(X\times\{k\})=\int_X |d_k(x)\>\<d_k(x)|\d\mu(x)
$$
where $x\in\Omega$ and $1\le k<m(x)+1$, so that every instrument implementing $\Mo^1$ is of the form 
$$
\ov\M^1(X\times\{k\},B)=\int_X\tr{B\sigma_k(x)}\kb{d_k(x)}{d_k(x)}\d\mu(x),
$$
where $\sigma_k(x)$ are states,
and is rank-1 if and only if $\sigma_k(x)=\kb{\fii_k(x)}{\fii_k(x)}$.
The compression of $\ov\M^1$ is
$$
\ov\M(X,B)=\int_X\sum_{k=1}^{m(x)}\tr{B\sigma_k(x)}\kb{d_k(x)}{d_k(x)}\d\mu(x)
$$
and $\M$ is maximally refinable if 
$$
\M(X,B)
=\int_X\sum_{k=1}^{m(x)}\<\fii_k(x)|B\fii_k(x)\>\kb{d_k(x)}{d_k(x)}\d\mu(x)
$$
corresponding to a rank-1 $\Mo^1$-compatible instrument.
Now $r(x)=m(x)$, $\fii_{ks}(x)=\fii_k(x)\delta_{ks}$, and $\|\fii_k(x)\|=1$ in Theorem \ref{theorem1}.
For example, the choice $\fii_k(x)=g_k(x)/\|g_k(x)\|$ gives
 $\A_s(x)=\|g_s(x)\|^{-1}\kb{g_s(x)}{d_s(x)}$ and
 %$\A_s(x)g_\ell(x)=\delta_{s\ell}\|g_s(x)\|^{-1}g_s(x)$ and
$\A_s(x)\A_t(x)=\delta_{st}\|g_s(x)\|^{-1}\A_s(x)$ as in Example \ref{example7}. 

In the case of a rank-1 $\Mo$ compatible instrument $\M$, 
$r(x)=1$, $\fii_{k1}(x)=\fii_k(x)$, and $\<\fii_k(x)|\fii_l(x)\>=\delta_{kl}$ so that then
\begin{eqnarray*}
\M(X,B)&=&\int_X\A_1(x)^*B\A_1(x)\d\mu(x) \\
&=&\int_X\sum_{k,l=1}^{m(x)}
\<\fii_{k}(x)|B\fii_{l}(x)\>|d_{k}(x)\>\<d_l(x)| \d\mu(x).
\end{eqnarray*}
Next we consider posterior states.

Let $\varrho$ be an initial state of the system before the measurement of $\Mo$ described by 
an $\Mo$-compatible instrument $\M$.
Then the measurement outcome probabilities are
$$
p^\Mo_\varrho(X)=\tr{\varrho\Mo(X)}=\tr{\varrho\M(X,I)}=\tr{\M_*(X,\varrho)}.
$$
Denote by $w_\varrho$ the density (or weight function) of $p^\Mo_\varrho$ with respect to $\mu$, i.e.\ $\d p^\Mo_\varrho(x)=w_\varrho(x)\d\mu(x)$.
If $w_\varrho(x)\ne 0$, then a posterior state  \cite{Oz85,Ho98,Pe12} corresponding to the outcome $x\in\Omega$ is
$$
\varrho_x=
w_\varrho(x)^{-1}\sum_{k=1}^{r(x)}\A_s(x)\varrho\A_s(x)^*.
$$
Sometimes $\varrho_x$ is interpreted as a final state of the system after
the value $x$ is observed.
This interpretation is problematic since, on the first hand, $\varrho_x$ is not necessarily unique. On the other hand, 
it may happen that $\mu(\{x\})=0$ (e.g.\ position observables). Then it is better to
define the {conditional output state} 
$$
\varrho_X=\M_*(X,\varrho)/\tr{\M_*(X,\varrho)}
$$
 corresponding to a set $X$ of outcomes.
Since
$$
\varrho_X=p^\Mo_\varrho(X)^{-1}\int_X\varrho_x\d p^\Mo_\varrho(x)
$$
is a `continuous' mixture of the states $\varrho_x$ we have an obvious interpretation:

Prepare the system in the fixed state $\varrho$.
Do the measurement to get some value $x^1$ and a (possibly unknown) output or posterior state $\varrho_{x^1}$ corresponding to the outcome $x^1$.
Repeat the process $N$ times to get the sequences $\{x^1,x^2,\ldots,x^N\}$ and
$\{\varrho_{x^1},\varrho_{x^2},\ldots,\varrho_{x^N}\}$.
Let $N_X$ be the number of results $x^j$ which belong to $X\subseteq\Omega$.
Then
$$
\lim_{N\to\infty}\frac{N_X}{N}=p_\varrho^\Mo(X)
$$
and the limit mean value
\begin{eqnarray*}
\lim_{N\to\infty}\frac{1}{N_X}\sum_{x^j\in X}\varrho_{x^j}
&=&\lim_{N\to\infty}
\left(\frac{N_X}N\right)^{-1}\sum^*_{x^j\in X}\varrho_{x^j}\frac{N_{\{x^j\}}}{N} \\
&=&p^\Mo_\varrho(X)^{-1}\int_X\varrho_x\d p^\Mo_\varrho(x)
=\varrho_X
\end{eqnarray*}
where $\sum^*$ means that the sum is taken over distinct values $x^j$ only.
Note that in the discrete case $\varrho_{\{x_i\}}=\varrho_{x_i}=\varrho_i$ if $\Mo(\{x_i\})\ne 0$.

Let ${\bf M}=\<\hi',\Po,\sigma=\kb{\xi}{\xi},U\>$ be a measurement model of a POVM $\Mo$ and $\M$ the related $\Mo$-compatible instrument, i.e.\
$\bf M$ is a pure realization of $\M$.
The structure of measurement models is completely determined in \cite{Pe12} so that we consider only a special case where $\hi'$ is the smallest possible ancillary Hilbert space. It can be shown \cite{Pe12} that it is (unitarily equivalent with) $L^2(\mu)$, the space of (square integrable) wave functions $\Psi:\,\Omega\to\C$, so that we choose $\hi'=L^2(\mu)$. Now $\M$ is rank-1 and
$\Po(X)\Psi=\CHI X\Psi$ (where $\CHI X(x)=1$ if $x\in X$ and 0 otherwise)
so that the pointer PVM $\Po$ is the usual `position observable' on $\Omega$.
Finally, the measurement interaction on $\hi\otimes L^2(\mu)$ is some extension of
$$
\big[U (\psi\otimes\xi)\big](x)=\sum_{k=1}^{m(x)}\<d_k(x)|\psi\>\fii_k(x)\otimes 1
$$
where $\psi\in\hi$. Now vectors $\fii_k(x)$ determine rank-1 $\M$ as before.

To end this section we note that, similarly as in the discrete case, 
the $\Mo$-compatible instruments
$$
\M(X,B)=\int_X\tr{B\sigma(x)}\d\Mo(x)
$$
can be seen as preparators of (statistical mixtures of) states $\sigma(x)$.
Note that $$p_\varrho^\Mo(X)\varrho_X =\M_*(X,\varrho)=\int_X\sigma(x)\d p_\varrho^M(x)=\int_X\varrho_x\d p_\varrho^\Mo(x)$$ so that $\varrho_x=\sigma(x)$ (almost all $x$).
If $\Mo$ is rank-1 then all its instruments (and thus measurement models or measurements) can be seen as such preparators. {\it If we know that $\Mo$ is rank-1 then we know that any measurement of $\Mo$ gives (approximately) some fixed output state $\sigma(x)$ by post selection whatever the input state $\varrho$ is.}
The final task is to show that, for any POVM $\Mo$, we can measure its maximally refined rank-1 version $\Mo^1$. Similarly as in the discrete case one can measure $\Mo^1$ by using a measurement model realizing maximally refinable $\Mo$-compatible instrument $\M$. One just changes the pointer observable $\Po$ to its maximally refined $\Po^1$ or performs a sequential measurement  with a discrete self-adjoint operator as before. Next we consider the last option in the case of a rank-1 instrument $\M$.

Let $K$ be the largest multiplicity $m(x)$ of $\Mo$ and $\{\No_k\}_{k=0}^K$ some PVM. Pick some unit vectors $\phi_k\in\No_k\hi$ and define an
$\Mo$-compatible instrument $\M(X,B)=\int_X\sum_{k,l=1}^{m(x)}\<\phi_k|B\phi_l\>\kb{d_k(x)}{d_l(x)}{\d}\mu(x)$.
Then
\begin{eqnarray*}
\mathcal J(X\times\{k\},B)&=&\M(X,\No_kB\No_k) \\
&=&\int_X\<\phi_k|B\phi_k\>\kb{d_k(x)}{d_k(x)}\d\mu(x)
\end{eqnarray*}
implements $\Mo^1$, that is,
$$
\Mo^1(X\times\{k\})
=\int_X\kb{d_k(x)}{d_k(x)}\d\mu(x)=\mathcal J(X\times\{k\},I).
$$
Note that, if the `multiplicity' $k$ is obtained in the last  vN-L measurement of $\No$ then the input state $\varrho$ is collapsed into $\phi_k$ and the measurement is completed.

\section{Conclusion}

We have seen that the maximally refined version $\Mo^1$ of a POVM $\Mo$ can be measured by using a (maximally refinable) measurement model for $\Mo$ and then changing the pointer observable. Another way to measure $\Mo^1$ is to 
choose a quite arbitrary discrete PVM $\No$ and a suitable measurement model of $\Mo$ and then perform a sequential measurement of $\Mo$ and $\No$.
The instrument $\M$ of the first measurement can be chosen to be rank-1, i.e.\ it is minimal.

We call a measurement of $\Mo^1$ complete since
it also gives information from multiplicities of the outcomes of a POVM $\Mo$.
Moreover, a complete measurement can be viewed as a state preparation procedure.

Although the maximally refinable instrument of $\Mo$ can be interpreted as a minimal (i.e.\ rank-1) instrument of $\Mo^1$ it is not a minimal instrument of $\Mo$ (if $r_i=n_i>1$). Hence, considered as instruments of $\Mo$, the realizations of maximally refinable instruments have unnecessarily large ancillary Hilbert space $\hi'$. This problem can be overcome by measuring $\Mo$ and $\No$ sequentially.
If $\Mo$ is of rank 1 then $\Mo^1=\Mo$ and its maximally refinable instruments are automatically minimal.

\end{document}